\definecolor{darkblue}{rgb}{0,0.08,0.45}
\renewcommand{\cite}{\citep}
\newcommand{\ubold}{\fontseries{b}\selectfont}
\newtheoremstyle{mydefinition}
  {5.5pt} 
  {0pt} 
  {} 
  {} 
  {\bfseries} 
  {.} 
  {.5em} 
  {} 
\newtheoremstyle{mytheorem}
  {5.5pt} 
  {0pt} 
  {\itshape} 
  {} 
  {\bfseries} 
  {.} 
  {.5em} 
  {} 
\theoremstyle{mydefinition}
\newtheorem{example}{Example}
\theoremstyle{mytheorem}
\newtheorem{theorem}{Theorem}
\newcounter{enum2}
\newcommand{\from}{\colon\!}
\newcommand{\R}{\mathbb{R}}
\newcommand{\Z}{\mathbb{Z}}
\newcommand{\Sc}{\mathcal{S}}
\renewcommand{\vec}[1]{\boldsymbol{#1}}
\renewcommand{\phi}{\varphi}
\newcommand{\argmin}{\mathop{\text{argmin}}}
\newcommand{\diag}{\mathop{\text{diag}}}
\newcommand{\KL}{D_{\mathrm{KL}}}
\newcommand{\dom}{\mathrm{dom}}
\newcommand{\Exp}{\mathbf{E}}
\title{\textbf{Tensor Balancing on Statistical Manifold}}
\author{
  Mahito Sugiyama \\
  National Institute of Informatics \\
  JST, PRESTO \\
  \texttt{mahito@nii.ac.jp} \\
  \And
  Hiroyuki Nakahara \\
  RIKEN Brain Science Institute \\
  \texttt{hiro@brain.riken.jp} \\
  \And
  Koji Tsuda \\
  The University of Tokyo \\
  RIKEN AIP; NIMS \\
  \texttt{tsuda@k.u-tokyo.ac.jp} \\
}
\begin{document}
\maketitle

\begin{abstract}
 We solve \emph{tensor balancing}, rescaling an $N$th order nonnegative tensor by multiplying $N$ tensors of order $N - 1$ so that every fiber sums to one.
 This generalizes a fundamental process of \emph{matrix balancing} used to compare matrices in a wide range of applications from biology to economics.
 We present an efficient balancing algorithm with quadratic convergence using Newton's method and show in numerical experiments that the proposed algorithm is several orders of magnitude faster than existing ones.
 To theoretically prove the correctness of the algorithm, we model tensors as probability distributions in a statistical manifold and realize tensor balancing as projection onto a submanifold.
 The key to our algorithm is that the gradient of the manifold, used as a Jacobian matrix in Newton's method, can be analytically obtained using the \emph{M{\"o}bius inversion formula}, the essential of combinatorial mathematics.
 Our model is not limited to tensor balancing, but has a wide applicability as it includes various statistical and machine learning models such as weighted DAGs and Boltzmann machines.
\end{abstract}

\section{Introduction}\label{sec:intro}
\emph{Matrix balancing} is the problem of rescaling a given square nonnegative matrix $A \in \R^{n \times n}_{\ge 0}$ to a \emph{doubly stochastic matrix} $RAS$, where every row and column sums to one, by multiplying two diagonal matrices $R$ and $S$.
This is a fundamental process for analyzing and comparing matrices in a wide range of applications, including input-output analysis in economics, called the RAS approach~\cite{Parikh79,Miller09,Lahr04}, seat assignments in elections~\cite{Balinski08,Akartunali16}, Hi-C data analysis~\cite{Rao14,Wu16}, the Sudoku puzzle~\cite{Moon09}, and the optimal transportation problem~\cite{Cuturi13,Frogner15,Solomon15}.
An excellent review of this theory and its applications is given by~\citet{Idel16}.

The standard matrix balancing algorithm is the \emph{Sinkhorn-Knopp algorithm}~\cite{Sinkhorn64,Sinkhorn67,Marshall68,Knight08}, a special case of Bregman's balancing method~\cite{Lamond81} that iterates rescaling of each row and column until convergence.
The algorithm is widely used in the above applications due to its simple implementation and theoretically guaranteed convergence.
However, the algorithm converges linearly~\cite{Soules91}, which is prohibitively slow for recently emerging large and sparse matrices.
Although~\citet{Livne04} and~\citet{Knight13} tried to achieve faster convergence by approximating each step of Newton's method,
the exact Newton's method with quadratic convergence has not been intensively studied yet.

Another open problem is \emph{tensor balancing}, which is a generalization of balancing from matrices to higher-order multidimentional arrays, or \emph{tensors}.
The task is to rescale an $N$th order nonnegative tensor to a \emph{multistochastic tensor}, in which every fiber sums to one, by multiplying $(N - 1)$th order $N$ tensors.
There are some results about mathematical properties of multistochastic tensors~\cite{Cui14,Chang16,Ahmed03}.
However, there is no result for tensor balancing algorithms with guaranteed convergence that transforms a given tensor to a multistochastic tensor until now.

Here we show that Newton's method with quadratic convergence can be applied to tensor balancing while avoiding solving a linear system on the full tensor.
Our strategy is to realize matrix and tensor balancing as \emph{projection onto a dually flat Riemmanian submanifold} (Figure~\ref{fig:overview}), which is a statistical manifold and known to be the essential structure for probability distributions in information geometry~\cite{Amari16}.
Using a partially ordered outcome space, we generalize the \emph{log-linear model}~\cite{Agresti12} used to model the higher-order combinations of binary variables~\cite{Amari01,Ganmor11,Nakahara02,Nakahara03}, which allows us to model tensors as probability distributions in the statistical manifold.
The remarkable property of our model is that the gradient of the manifold can be analytically computed using the \emph{M{\"o}bius inversion formula}~\cite{Rota64}, the heart of combinatorial mathematics~\cite{Ito93}, which enables us to directly obtain the Jacobian matrix in Newton's method.
Moreover, we show that $(n - 1)^N$ entries for the size $n^N$ of a tensor are invariant with respect to one of the two coordinate systems of the statistical manifold.
Thus the number of equations in Newton's method is $O(n^{N - 1})$.

\begin{figure}[t]
 \centering
 \includegraphics[width=.6\linewidth]{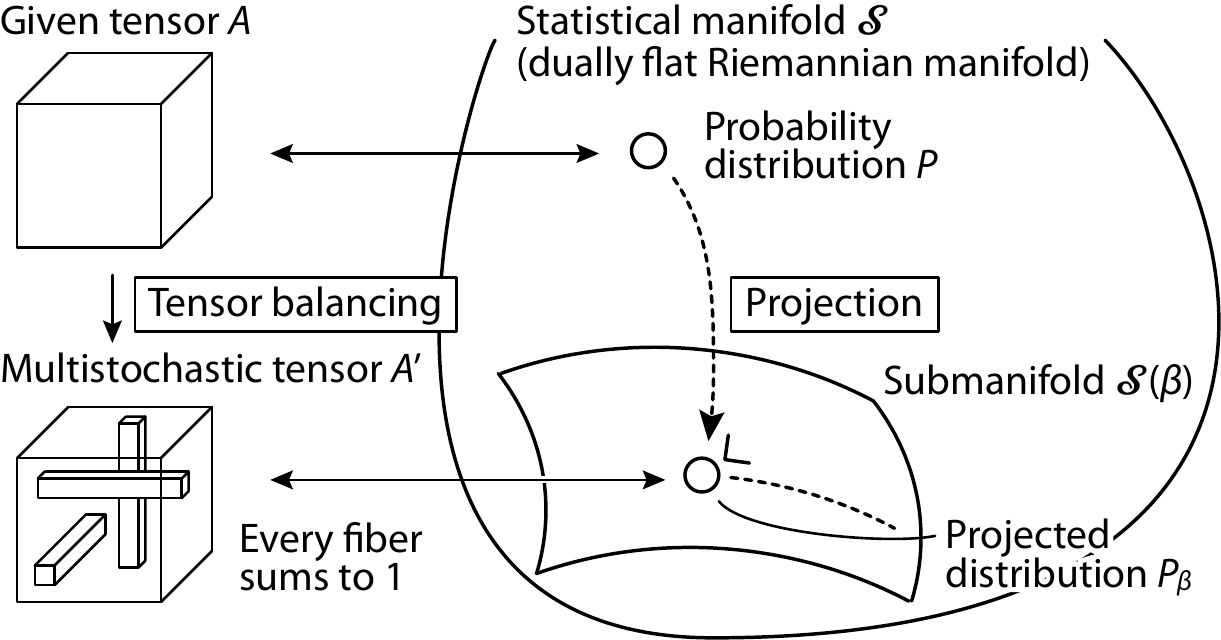}
 \caption{Overview of our approach.}
 \label{fig:overview}
\end{figure}

The remainder of this paper is organized as follows:
We begin with a low-level description of our matrix balancing algorithm in Section~\ref{sec:algorithm} and demonstrate its efficiency in numerical experiments in Section~\ref{sec:exp}.
To guarantee the correctness of the algorithm and extend it to tensor balancing, we provide theoretical analysis in Section~\ref{sec:theory}.
In Section~\ref{subsec:formulation}, we introduce a generalized log-linear model associated with a partial order structured outcome space, followed by introducing the dually flat Riemannian structure in Section~\ref{subsec:duallyflat}.
In Section~\ref{subsec:proj}, we show how to use Newton's method to compute projection of a probability distribution onto a submanifold.
Finally, we formulate the matrix and tensor balancing problem in Section~\ref{sec:balance} and summarize our contributions in Section~\ref{sec:conclusion}.

\section{The Matrix Balancing Algorithm}\label{sec:algorithm}
Given a nonnegative square matrix $A = (a_{ij}) \in \R^{n \times n}_{\ge 0}$,
the task of \emph{matrix balancing} is to find $\vec{r}, \vec{s} \in \R^{n}$ that satisfy
\begin{align}
 \label{eq:matrix_balancing}
 (RAS)\vec{1} = \vec{1},\quad
 (RAS)^T\vec{1} = \vec{1},
\end{align}
where $R = \diag(\vec{r})$ and $S = \diag(\vec{s})$.
The balanced matrix $A' = RAS$ is called \emph{doubly stochastic}, in which each entry $a'_{ij} = a_{ij} r_{i} s_{j}$ and all the rows and columns sum to one.
The most popular algorithm is the Sinkhorn-Knopp algorithm, which repeats updating $\vec{r}$ and $\vec{s}$ as $\vec{r} = 1 / (A\vec{s})$ and $\vec{s} = 1 / (A^T \vec{r})$.
We denote by $[n] = \{1, 2, \dots, n\}$ hereafter.

In our algorithm, instead of directly updating $\vec{r}$ and $\vec{s}$, we update two parameters $\theta$ and $\eta$ defined as
\begin{align}
 \label{eq:theta_eta_mat}
 \log p_{ij} = \sum_{i' \le i}\sum_{j' \le j}\theta_{i'j'},\quad
 \eta_{ij} = \sum_{i' \ge i}\sum_{j' \ge j}p_{i'j'}
\end{align}
for each $i, j \in [n]$, where we normalized entries as $p_{ij} = a_{ij} / \sum_{ij} a_{ij}$ so that $\sum_{ij} p_{ij} = 1$.
We assume for simplicity that each entry is strictly larger than zero. The assumption will be removed in Section~\ref{sec:balance}.

\begin{figure}[t]
 \centering
 \includegraphics[width=.6\linewidth]{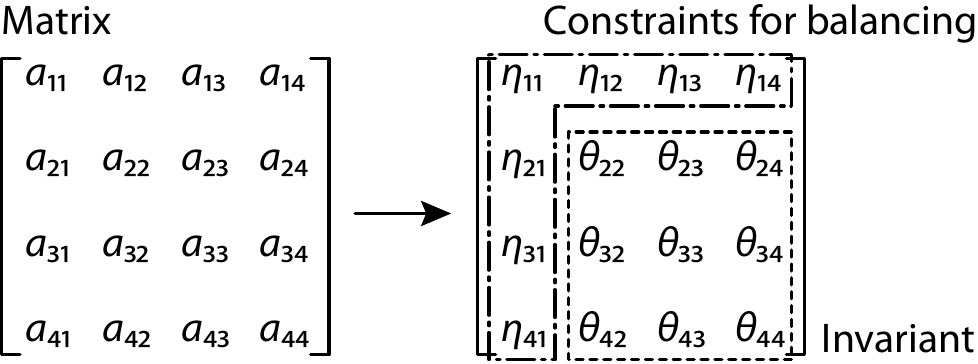}
 \caption{Matrix balancing with two parameters $\theta$ and $\eta$.}
 \label{fig:matrix}
\end{figure}

The key to our approach is that we update $\theta_{ij}^{(t)}$ with $i = 1$ or $j = 1$ by Newton's method at each iteration $t = 1, 2, \dots$ while fixing $\theta_{ij}$ with $i,j \not= 1$ so that $\eta_{ij}^{(t)}$ satisfies the following condition (Figure~\ref{fig:matrix}):
\begin{align*}
 \eta_{i1}^{(t)} = \frac{n - i + 1}{n},\quad
 \eta_{1j}^{(t)} = \frac{n - j + 1}{n}.
\end{align*}
Note that the rows and columns sum not to $1$ but to $1 / n$ due to the normalization.
The update formula is described as
\begin{align}
\label{eq:update_mat}
 \begin{bmatrix}
  \theta_{11}^{(t + 1)}\\
  \theta_{12}^{(t + 1)}\\
  \vdots\\
  \theta_{1n}^{(t + 1)}\\
  \theta_{21}^{(t + 1)}\\
  \vdots\\
  \theta_{n1}^{(t + 1)}
 \end{bmatrix}
 =
 \begin{bmatrix}
  \theta_{11}^{(t)}\\
  \theta_{12}^{(t)}\\
  \vdots\\
  \theta_{1n}^{(t)}\\
  \theta_{21}^{(t)}\\
  \vdots\\
  \theta_{n1}^{(t)}
 \end{bmatrix}
 - J^{-1}
 \begin{bmatrix}
  \eta_{11}^{(t)} - (n - 1 + 1) / n\\
  \eta_{12}^{(t)} - (n - 2 + 1) / n\\
  \vdots\\
  \eta_{1n}^{(t)} - (n - n + 1) / n\\
  \eta_{21}^{(t)} - (n - 2 + 1) / n\\
  \vdots\\
  \eta_{n1}^{(t)} - (n - n + 1) / n
 \end{bmatrix},
\end{align}
where $J$ is the Jacobian matrix given as
\begin{align}
 \label{eq:Jacobian_mat}
 J_{(ij)(i'j')} = \frac{\partial \eta^{(t)}_{ij}}{\partial \theta^{(t)}_{i'j'}} = \eta_{\max\{i, i'\}\max\{j, j'\}} - n^2 \eta_{ij}\eta_{i'j'},
\end{align}
which is derived from our theoretical result in Theorem~\ref{theorem:metric}.
Since $J$ is a $(2n - 1) \times (2n - 1)$ matrix, the time complexity of each update is $O(n^3)$, which is needed to compute the inverse of $J$.

After updating to $\theta_{ij}^{(t + 1)}$, we can compute $p_{ij}^{(t + 1)}$ and $\eta_{ij}^{(t + 1)}$ by Equation~\eqref{eq:theta_eta_mat}.
Since this update does not ensure the condition $\sum_{ij} p_{ij}^{(t + 1)} = 1$, we again update $\theta_{11}^{(t + 1)}$ as
\begin{align*}
 \theta_{11}^{(t + 1)} = \theta_{11}^{(t + 1)} - \log\sum_{ij}p_{ij}^{(t + 1)}
\end{align*}
and recompute $p_{ij}^{(t + 1)}$ and $\eta_{ij}^{(t + 1)}$ for each $i, j \in [n]$.

By iterating the above update process in Equation~\eqref{eq:update_mat} until convergence, $A = (a_{ij})$ with $a_{ij} = n p_{ij}$ becomes doubly stochastic.

\begin{figure}[t]
 \centering
 \includegraphics[width=.6\linewidth]{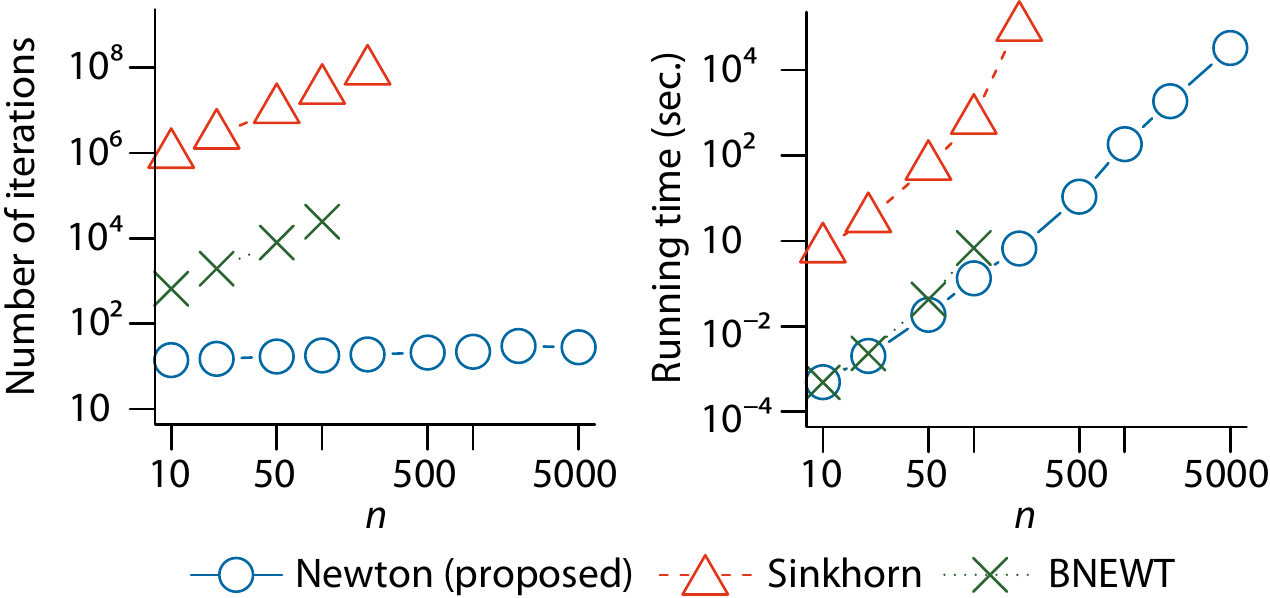}
 \caption{Results on Hessenberg matrices. The BNEWT algorithm (green) failed to converge for $n \ge 200$.}
 \label{fig:Hessenberg}
\end{figure}

\begin{figure}[t]
 \centering
 \includegraphics[width=.6\linewidth]{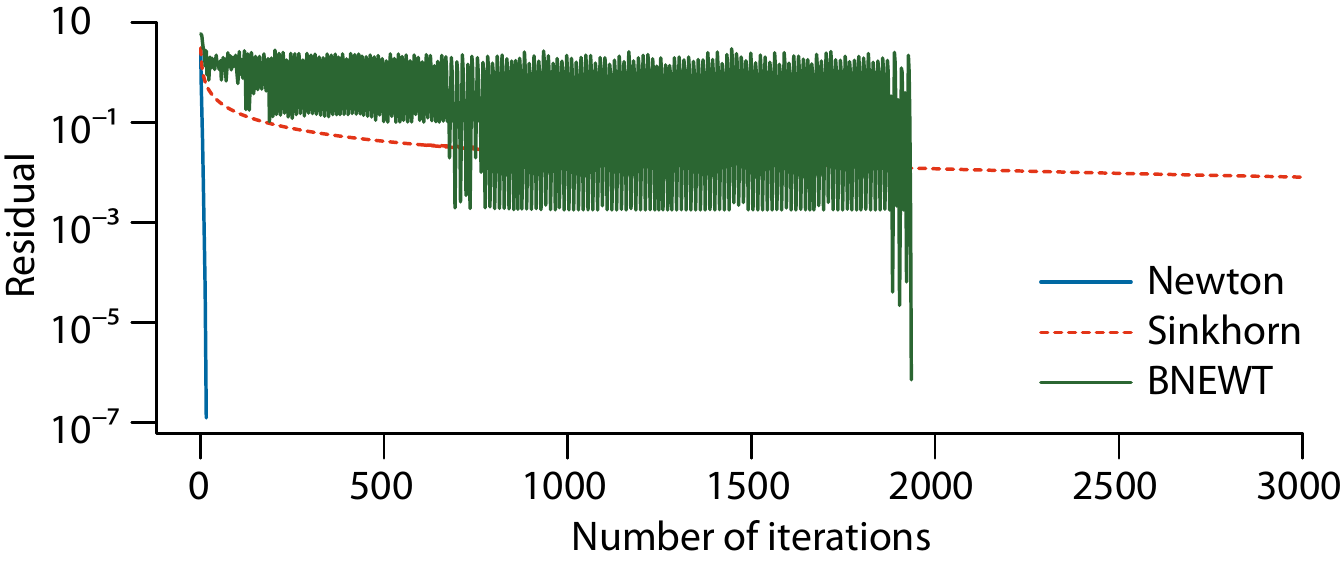}
 \caption{Convergence graph on $H_{20}$.}
 \label{fig:convergence}
\end{figure}

\section{Numerical Experiments}\label{sec:exp}
We evaluate the efficiency of our algorithm compared to the two prominent balancing methods, the standard Sinkhorn-Knopp algorithm~\cite{Sinkhorn64} and the state-of-the-art algorithm BNEWT~\cite{Knight13}, which uses Newton's method-like iterations with conjugate gradients.
All experiments were conducted on Amazon Linux AMI release 2016.09 with a single core of 2.3 GHz Intel Xeon CPU E5-2686 v4 and 256 GB of memory.
All methods were implemented in \texttt{C++} with the \texttt{Eigen} library and compiled with {\ttfamily gcc} 4.8.3\footnote{An implementation of algorithms for matrices and third order tensors is available at: \url{https://github.com/mahito-sugiyama/newton-balancing}}.
We have carefully implemented BNEWT by directly translating the MATLAB code provided in~\cite{Knight13} into \texttt{C++} with the \texttt{Eigen} library for fair comparison, and
used the default parameters.
We measured the residual of a matrix $A' = (a_{ij}')$ by the squared norm $\|(A'\vec{1} - \vec{1}, A'^T\vec{1} - \vec{1})\|_2$, where each entry $a'_{ij}$ is obtained as $n p_{ij}$ in our algorithm, and ran each of three algorithms until the residual is below the tolerance threshold $10^{-6}$.\vspace*{5pt}\\
\textbf{Hessenberg Matrix.}
The first set of experiments used a Hessenberg matrix, which has been a standard benchmark for matrix balancing~\cite{Parlett82,Knight13}.
Each entry of an $n \times n$ Hessenberg matrix $H_n = (h_{ij})$ is given as $h_{ij} = 0$ if $j < i - 1$ and $h_{ij} = 1$ otherwise.
We varied the size $n$ from $10$ to $5,000$, and measured running time (in seconds) and the number of iterations of each method.

Results are plotted in Figure~\ref{fig:Hessenberg}.
Our balancing algorithm with the Newton's method (plotted in blue in the figures) is clearly the fastest:
It is three to five orders of magnitude faster than the standard Sinkhorn-Knopp algorithm (plotted in red).
Although the BNEWT algorithm (plotted in green) is competitive if $n$ is small, it suddenly fails to converge whenever $n \ge 200$, which is consistent with results in the original paper~\cite{Knight13} where there is no result for the setting $n \ge 200$ on the same matrix.
Moreover, our method converges around $10$ to $20$ steps, which is about three and seven orders of magnitude smaller than BNEWT and Sinkhorn-Knopp, respectively, at $n = 100$.

\begin{figure}[t]
 \centering
 \includegraphics[width=.6\linewidth]{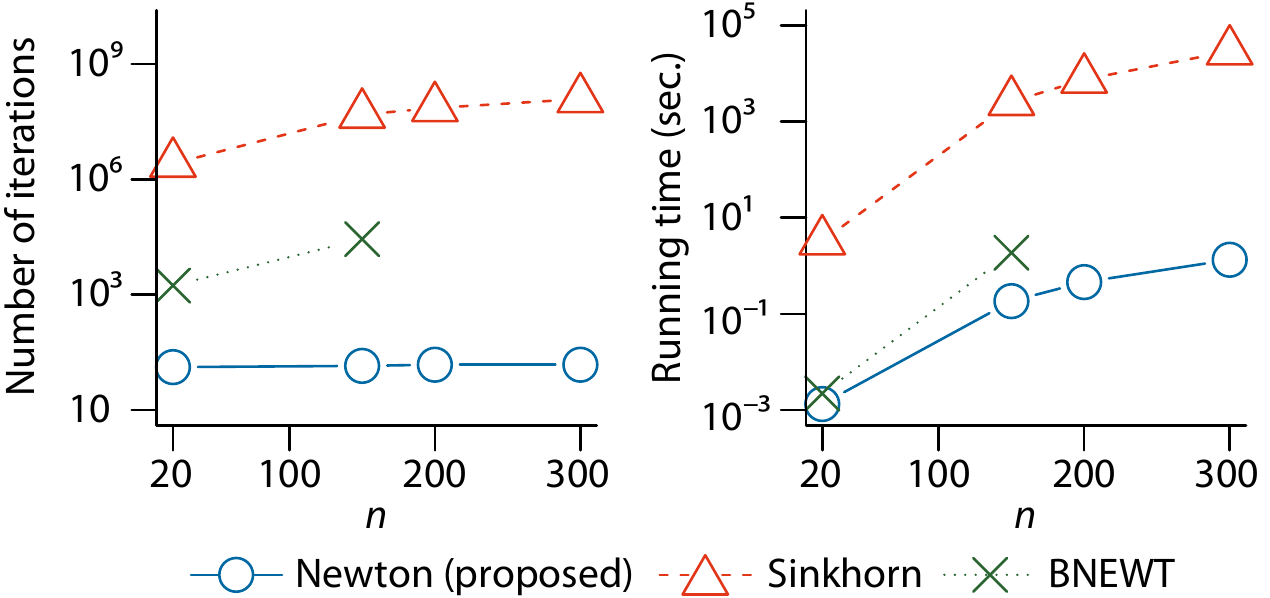}
 \caption{Results on Trefethen matrices. The BNEWT algorithm (green) failed to converge for $n \ge 200$.}
 \label{fig:Trefethen}
\end{figure}

To see the behavior of the rate of convergence in detail, we plot the convergence graph in Figure~\ref{fig:convergence} for $n = 20$, where we observe the slow convergence rate of the Sinkhorn-Knopp algorithm and unstable convergence of the BNEWT algorithm, which contrasts with our quick convergence.\vspace*{5pt}\\
\textbf{Trefethen Matrix.}
Next, we collected a set of Trefethen matrices from a collection website\footnote{\url{http://www.cise.ufl.edu/research/sparse/matrices/}}, which are nonnegative diagonal matrices with primes.
Results are plotted in Figure~\ref{fig:Trefethen}, where we observe the same trend as before:
Our algorithm is the fastest and about four orders of magnitude faster than the Sinkhorn-Knopp algorithm.
Note that larger matrices with $n > 300$ do not have total support, which is the necessary condition for matrix balancing~\cite{Knight13}, while the BNEWT algorithm fails to converge if $n = 200$ or $n = 300$.

\section{Theoretical Analysis}\label{sec:theory}
In the following, we provide theoretical support to our algorithm by formulating the problem as a projection within a statistical manifold, in which a matrix corresponds to an element, that is, a probability distribution, in the manifold.

We show that a balanced matrix forms a submanifold and matrix balancing is projection of a given distribution onto the submanifold, where
the Jacobian matrix in Equation~\eqref{eq:Jacobian_mat} is derived from the gradient of the manifold.

\subsection{Formulation}\label{subsec:formulation}
We introduce our log-linear probabilistic model, where the outcome space is a partially ordered set, or a \emph{poset}~\cite{Gierz03}.
We prepare basic notations and the key mathematical tool for posets, the M{\"o}bius inversion formula, followed by formulating the log-linear model.

\subsubsection{M{\"o}bius Inversion}\label{subsubsec:moebius}
A poset $(S, \le)$, the set of elements $S$ and a partial order $\le$ on $S$, is a fundamental structured space in computer science.
A \emph{partial order} ``$\le$'' is a relation between elements in $S$ that satisfies the following three properties: For all $x, y, z \in S$,
(1) $x \le x$ (reflexivity), (2) $x \le y$, $y \le x \Rightarrow x = y$ (antisymmetry), and (3) $x \le y$, $y \le z \Rightarrow x \le z$ (transitivity).
In what follows, $S$ is always finite and includes the least element (bottom) $\bot \in S$; that is, $\bot \le x$ for all $x \in S$.
We denote $S \setminus \{\bot\}$ by $S^+$.

\citet{Rota64} introduced the \emph{M{\"o}bius inversion formula} on posets by generalizing the inclusion-exclusion principle.
Let $\zeta \from S \times S \to \{0, 1\}$ be the \emph{zeta function} defined as
\begin{align*}
 \zeta(s, x) = \left\{
 \begin{array}{ll}
  1 & \text{if } s \le x, \\
  0 & \text{otherwise}.
 \end{array}
 \right.
\end{align*}
The \emph{M{\"o}bius function} $\mu \from S \times S \to \Z$ satisfies $\zeta\mu = I$, which is inductively defined for all $x, y$ with $x \le y$ as
 \begin{align*}
  \mu(x, y) = \left\{
  \begin{array}{ll}
   1 & \text{if } x = y,\\
   -\sum_{x \le s < y} \mu(x, s) & \text{if } x < y,\\
   0 & \text{otherwise}.
  \end{array}
  \right.
 \end{align*}
From the definition, it follows that
\begin{equation}
\begin{aligned}
 \label{eq:muzero}
 \sum_{s \in S}\zeta(s, y)\mu(x, s) &= \sum_{x \le s \le y}\mu(x, s) = \delta_{xy},\\
 \sum_{s \in S}\zeta(x, s)\mu(s, y) &= \sum_{x \le s \le y}\mu(s, y) = \delta_{xy}
\end{aligned} 
\end{equation}
with the Kronecker delta $\delta$ such that $\delta_{xy} = 1$ if $x = y$ and $\delta_{xy} = 0$ otherwise.
Then for any functions $f$, $g$, and $h$ with the domain $S$ such that
\begin{align*}
 g(x) &= \sum_{s \in S} \zeta(s, x) f(s) = \sum_{s \le x} f(s),\\
 h(x) &= \sum_{s \in S} \zeta(x, s) f(s) = \sum_{s \ge x} f(s),
\end{align*}
$f$ is uniquely recovered with the M{\"o}bius function:
\begin{align*}
 f(x) = \sum_{s \in S} \mu(s, x) g(s),\quad
 f(x) = \sum_{s \in S} \mu(x, s) h(s).
\end{align*}
This is called the \emph{M{\"o}bius inversion formula} and is at the heart of enumerative combinatorics~\cite{Ito93}.

\subsubsection{Log-Linear Model on Posets}\label{subsubsec:model}
We consider a probability vector $p$ on $(S, \le)$ that gives a discrete probability distribution with the outcome space $S$.
A probability vector is treated as a mapping $p\from S\to (0, 1)$ such that $\sum_{x \in S} p(x)= 1$, where every entry $p(x)$ is assumed to be strictly larger than zero.

Using the zeta and the M{\"o}bius functions, let us introduce two mappings $\theta\from S\to \R$ and $\eta\from S \to \R$ as
\begin{align}
 \label{eq:theta}
 \theta(x) &= \sum_{s \in S} \mu(s, x) \log p(s),\\
 \label{eq:eta}
 \eta(x) &= \sum_{s \in S} \zeta(x, s) p(s) = \sum_{s \ge x} p(s).
 \end{align}
From the M{\"o}bius inversion formula, we have
\begin{align}
 \label{eq:logp}
 \log p(x) &= \sum_{s \in S} \zeta(s, x) \theta(s) = \sum_{s \le x} \theta(s),\\
 \label{eq:p}
 p(x) &= \sum_{s \in S} \mu(x, s) \eta(s).
\end{align}
They are generalization of the \emph{log-linear model}~\cite{Agresti12} that gives the probability $p(\vec{x})$ of an $n$-dimensional binary vector $\vec{x} = (x^1, \dots, x^n) \in \{0, 1\}^n$ as
\begin{align*}
 \log p(\vec{x}) = \sum_{i} \theta^i x^i + \sum_{i < j} \theta^{ij} x^i x^j + \sum_{i < j < k} \theta^{ijk} x^i x^j x^k + \dots + \theta^{1\dots n} x^1 x^2 \dots x^n - \psi,
\end{align*}
where $\vec{\theta} = (\theta^{1}, \dots, \theta^{12\dots n})$ is a parameter vector, $\psi$ is a normalizer, and $\vec{\eta} = (\eta^{1}, \dots, \eta^{12\dots n})$ represents the expectation of variable combinations such that
\begin{align*}
 \eta^{i} &= \Exp[x^i] = \Pr(x^i = 1),\\
 \eta^{ij} &= \Exp[x^ix^j] = \Pr(x^i = x^j = 1),\ i < j, \dots\\
 \eta^{1\dots n} &= \Exp[x^1 \dots x^n] = \Pr(x^1 = \dots = x^n = 1).
\end{align*}
They coincide with Equations~\eqref{eq:logp} and~\eqref{eq:eta} when we let $S = 2^V$ with $V = \{1, 2, \dots, n\}$, each $x \in S$ as the set of indices of ``$1$'' of $\vec{x}$, and the order $\le$ as the inclusion relationship, that is, $x \le y$ if and only if $x \subseteq y$.
\citet{Nakahara06} have pointed out that $\vec{\theta}$ can be computed from $p$ using the inclusion-exclusion principle in the log-linear model.
We exploit this combinatorial property of the log-linear model using the M{\"o}bius inversion formula on posets and extend the log-linear model from the power set $2^V$ to any kind of posets $(S, \le)$.
\citet{Sugiyama2016ISIT} studied a relevant log-linear model, but the relationship with M{\"o}bius inversion formula has not been analyzed yet.

\subsection{Dually Flat Riemannian Manifold}\label{subsec:duallyflat}
We theoretically analyze our log-linear model introduced in Equations~\eqref{eq:theta},~\eqref{eq:eta} and show that they form dual coordinate systems on a dually flat manifold, which has been mainly studied in the area of information geometry~\cite{Amari01,Nakahara02,Amari14,Amari16}.
Moreover, we show that the Riemannian metric and connection of our model can be analytically computed in closed forms.

In the following, we denote by $\xi$ the function $\theta$ or $\eta$ and by $\nabla$ the gradient operator with respect to $S^+ = S \setminus \{\bot\}$, i.e., $(\nabla f(\xi))(x) = \partial f / \partial \xi(x)$ for $x \in S^+$, and denote by $\vec{\Sc}$ the set of probability distributions specified by probability vectors, which forms a statistical manifold.
We use uppercase letters $P, Q, R, \dots$ for points (distributions) in $\vec{\Sc}$ and their lowercase letters $p, q, r, \dots$ for the corresponding probability vectors treated as mappings.
We write $\theta_P$ and $\eta_P$ if they are connected with $p$ by Equations~\eqref{eq:theta} and~\eqref{eq:eta}, respectively, and abbreviate subscripts if there is no ambiguity.

\subsubsection{Dually Flat Structure}\label{subsubsec:dual}
We show that $\vec{\Sc}$ has the \emph{dually flat Riemannian structure} induced by two functions $\theta$ and $\eta$ in Equation~\eqref{eq:theta} and~\eqref{eq:eta}.
We define $\psi(\theta)$ as
\begin{align}
 \label{eq:psi}
 \psi(\theta) = -\theta(\bot) = -\log p(\bot),
\end{align}
which corresponds to the normalizer of $p$.
It is a convex function since we have
\begin{align*}
 \psi(\theta) = \log \sum_{x \in S} \exp\left(\,\sum_{\bot < s \le x} \theta(s)\,\right)
\end{align*}
from $\log p(x) = \sum_{\bot < s \le x} \theta(s) - \psi(\theta)$.
We apply the \emph{Legendre transformation} to $\psi(\theta)$ given as
\begin{align}
 \label{eq:phi}
 \phi(\eta) = \max_{\theta'} \Big(\theta'\eta - \psi(\theta')\Big),\quad
 \theta' \eta = \sum_{\mathclap{x \in S^+}} \theta'(x) \eta(x).
\end{align}
Then $\phi(\eta)$ coincides with the negative entropy.
\begin{theorem}[Legendre dual]
 \label{theorem:negentropy}
 \begin{align*}
   \phi(\eta) = \sum_{x \in S} p(x) \log p(x).
 \end{align*}
\end{theorem}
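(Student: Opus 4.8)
The plan is to compute the Legendre transform in Equation~\eqref{eq:phi} directly by finding the maximizing $\theta'$ and substituting back. First I would observe that the map $\theta' \mapsto \theta'\eta - \psi(\theta')$ is concave in $\theta'$ (since $\psi$ is convex, as shown in the excerpt), so the maximum is characterized by the stationarity condition $\partial/\partial\theta'(x)\bigl(\theta'\eta - \psi(\theta')\bigr) = 0$ for all $x \in S^+$, i.e.\ $\eta(x) = \partial\psi/\partial\theta'(x)$. So the first key step is to verify that the $\theta'$ corresponding to our fixed distribution $p$ (via Equation~\eqref{eq:theta}) is exactly the maximizer, which amounts to checking that $\partial\psi/\partial\theta(x) = \eta(x)$. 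Using $\psi(\theta) = \log\sum_{y\in S}\exp\bigl(\sum_{\bot < s\le y}\theta(s)\bigr)$, a direct differentiation gives
\begin{align*}
 \frac{\partial\psi}{\partial\theta(x)} = \frac{\sum_{y\in S}\zeta(x,y)\exp\bigl(\sum_{\bot<s\le y}\theta(s)\bigr)}{\sum_{y\in S}\exp\bigl(\sum_{\bot<s\le y}\theta(s)\bigr)} = \sum_{y\in S}\zeta(x,y)p(y) = \eta(x),
\end{align*}
where the middle equality uses $p(y) = \exp\bigl(\sum_{\bot<s\le y}\theta(s) - \psi(\theta)\bigr)$ and the normalizer cancels. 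This confirms that $\theta' = \theta_P$ attains the maximum.

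Next I would substitute $\theta' = \theta$ into $\theta'\eta - \psi(\theta')$ and simplify. We have $\phi(\eta) = \sum_{x\in S^+}\theta(x)\eta(x) - \psi(\theta)$. The second key step is to rewrite $\sum_{x\in S^+}\theta(x)\eta(x)$ using the definition $\eta(x) = \sum_{s\ge x}p(s)$ and exchange the order of summation: $\sum_{x\in S^+}\theta(x)\eta(x) = \sum_{x\in S^+}\theta(x)\sum_{s\ge x}p(s) = \sum_{s\in S}p(s)\sum_{\bot<x\le s}\theta(x)$. Now invoke $\sum_{\bot<x\le s}\theta(x) = \log p(s) + \psi(\theta)$ (rearranging Equation~\eqref{eq:logp} together with Equation~\eqref{eq:psi}), so this becomes $\sum_{s\in S}p(s)\bigl(\log p(s) + \psi(\theta)\bigr) = \sum_{s\in S}p(s)\log p(s) + \psi(\theta)$, using $\sum_s p(s) = 1$. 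Therefore $\phi(\eta) = \sum_{s\in S}p(s)\log p(s) + \psi(\theta) - \psi(\theta) = \sum_{x\in S}p(x)\log p(x)$, which is the claim.

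I do not expect a serious obstacle here; the only delicate points are bookkeeping with the bottom element $\bot$ (the sums in $\theta'\eta$ and $\nabla$ range over $S^+$, not $S$, so the $\theta(\bot)$ term must be handled via $\psi$ rather than appearing directly) and making sure the Legendre maximum is genuinely attained — the latter follows from strict convexity of $\psi$ on the relevant affine subspace together with the fact that $p\mapsto\theta_P$ and $p\mapsto\eta_P$ are bijections onto their images (a standard consequence of the M\"obius inversion formula guaranteeing invertibility between $p$, $\theta$, and $\eta$). If one wants to avoid appealing to existence of the maximizer abstractly, the computation above already exhibits an explicit $\theta$ satisfying the first-order condition, and concavity then guarantees it is the global maximum, so the proof is self-contained.
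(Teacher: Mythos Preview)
Your proof is correct. The paper takes a slightly different route: rather than first locating the maximizer via the stationarity condition $\nabla\psi(\theta')=\eta$, it computes $\theta'\eta-\psi(\theta')$ for an \emph{arbitrary} $\theta'$ (corresponding to some distribution $p'$), obtaining the cross-entropy $\sum_{x\in S}p(x)\log p'(x)$ directly through the M\"obius identity of Equation~\eqref{eq:muzero}; the maximum over $\theta'$ then follows from Gibbs' inequality (cross-entropy is maximized at $p'=p$). Your approach instead establishes $\partial\psi/\partial\theta(x)=\eta(x)$ up front --- which is precisely the first half of the next theorem (dual coordinate system) --- and then evaluates only at the optimal $\theta'=\theta$, using the $\zeta$-sum form $\log p(s)=\sum_{x\le s}\theta(x)$ rather than M\"obius inversion. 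Both arguments are short and valid; the paper's ordering has the incidental benefit that the intermediate identity $\theta'\eta-\psi(\theta')=\sum_x p(x)\log p'(x)$ (Equation~\eqref{eq:relentropy}) is reused immediately afterward to identify the Bregman divergence with the KL divergence, whereas your route does not produce that general identity as a by-product.
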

\begin{proof}
 From Equation~\eqref{eq:muzero}, we have
 \begin{align*}
  \theta' \eta
  = \sum_{x \in S^+} \left(\,\sum_{\bot < s \le x} \mu(s, x) \log p'(s) \sum_{s \ge x} p(s)\,\right)
  = \sum_{x \in S^+} p(x)\left(\,\log p'(x) - \log p'(\bot)\,\right).
 \end{align*}
 Thus it holds that
 \begin{align}
  \label{eq:relentropy}
  \theta'\eta - \psi(\theta')
  &= \sum_{x \in S} p(x) \log p'(x).
 \end{align}
 Hence it is maximized with $p(x) = p'(x)$.
\end{proof}
\noindent Since they are connected with each other by the Legendre transformation, they form a \emph{dual coordinate system} $\nabla\psi(\theta)$ and $\nabla\phi(\eta)$ of $\vec{\Sc}$~\citep[Section~1.5]{Amari16}, which coincides with $\theta$ and $\eta$ as follows.
\begin{theorem}[dual coordinate system]
 \begin{align}
  \label{eq:psiphi}
  \nabla \psi(\theta) = \eta,\quad
  \nabla \phi(\eta) = \theta.
 \end{align}
\end{theorem}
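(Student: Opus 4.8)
The plan is to prove the two identities separately: the first, $\nabla\psi(\theta)=\eta$, by direct differentiation of the log-sum-exp expression already obtained for $\psi$; the second, $\nabla\phi(\eta)=\theta$, from the envelope property of the Legendre transform together with the first identity.

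First I would establish $\nabla\psi(\theta)=\eta$. Write $Z(\theta)=\sum_{x\in S}\exp\bigl(\sum_{\bot<s\le x}\theta(s)\bigr)$, so that $\psi(\theta)=\log Z(\theta)$, and fix $y\in S^+$. Since $y\neq\bot$, the term $\theta(y)$ occurs in $\sum_{\bot<s\le x}\theta(s)$ exactly when $y\le x$, hence $\partial/\partial\theta(y)\bigl(\sum_{\bot<s\le x}\theta(s)\bigr)=\zeta(y,x)$, and therefore $\partial\psi/\partial\theta(y)=Z(\theta)^{-1}\sum_{x\in S}\zeta(y,x)\exp\bigl(\sum_{\bot<s\le x}\theta(s)\bigr)$. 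Using $\log p(x)=\sum_{\bot<s\le x}\theta(s)-\psi(\theta)$, i.e.\ $\exp\bigl(\sum_{\bot<s\le x}\theta(s)\bigr)=p(x)Z(\theta)$ with $Z(\theta)=e^{\psi(\theta)}$, the factors of $Z(\theta)$ cancel and the right-hand side collapses to $\sum_{x\in S}\zeta(y,x)p(x)=\sum_{s\ge y}p(s)=\eta(y)$, which is exactly $\eta$ by Equation~\eqref{eq:eta}.

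Next, for $\nabla\phi(\eta)=\theta$: by the proof of Theorem~\ref{theorem:negentropy} the maximum in \eqref{eq:phi} defining $\phi(\eta_P)$ is attained at $\theta'=\theta_P$, so $\phi(\eta)=\theta\cdot\eta-\psi(\theta)$ where $\theta$ is regarded as a function of $\eta$ via the stationarity relation $\eta=\nabla\psi(\theta)$ just proved (well-defined and smooth because $\psi$ is strictly convex for strictly positive $p$). Differentiating with respect to $\eta(y)$ and applying the chain rule gives $\partial\phi/\partial\eta(y)=\theta(y)+\sum_{x\in S^+}\bigl(\eta(x)-\partial\psi/\partial\theta(x)\bigr)\,\partial\theta(x)/\partial\eta(y)$, and the bracketed factor vanishes by the first part, leaving $\partial\phi/\partial\eta(y)=\theta(y)$. (Alternatively one may simply invoke the standard fact that the gradient maps of a convex function and its Legendre conjugate are mutually inverse, which is the route taken in \cite{Amari16}.)

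The only genuine subtlety is the bookkeeping around the bottom element: the coordinates are indexed by $S^+$, with $\theta(\bot)=-\psi(\theta)$ fixed by normalization rather than free, so one must check that differentiation with respect to $\theta(y)$, $y\in S^+$, produces $\zeta(y,x)$ without a spurious $\bot$-contribution — which holds precisely because $\bot<s$ excludes $s=\bot$. A secondary point, needed to justify the envelope computation for $\phi$, is that strict positivity of $p$ makes $\psi$ strictly convex so that $\eta\mapsto\theta$ is a genuine differentiable map; granting this, the remaining manipulations are routine.
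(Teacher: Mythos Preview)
Your proof is correct and follows essentially the same route as the paper: direct differentiation of the log-sum-exp form of $\psi$ for the first identity, and differentiation of $\phi(\eta)=\theta\eta-\psi(\theta)$ at the maximizer for the second. The paper's argument for $\nabla\phi(\eta)=\theta$ is stated more tersely (it simply writes $\partial_{\eta(x)}(\theta\eta-\psi(\theta))=\theta(x)$ without spelling out the chain-rule cancellation), so your explicit envelope computation and your remarks on the $\bot$-bookkeeping and strict convexity add useful detail but do not constitute a different approach.
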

\begin{proof}
 They can be directly derived from our definitions (Equations~\eqref{eq:theta} and~\eqref{eq:phi}) as
\begin{align*}
 \frac{\partial\psi(\theta)}{\partial \theta(x)}
 &= \frac{\sum_{y \ge x} \exp\left(\sum_{\bot < s \le y} \theta(s)\right)}{\sum_{y \in S} \exp\left(\sum_{\bot < s \le y} \theta(s)\right)}
 = \sum_{s \ge x} p(s) = \eta(x),\\
 \frac{\partial\phi(\eta)}{\partial \eta(x)}
 &= \frac{\partial}{\partial \eta(x)} \Big(\theta\eta - \psi(\theta)\Big) = \theta(x).&&\qedhere
\end{align*}
\end{proof}
Moreover, we can confirm the \emph{orthogonality} of $\theta$ and $\eta$ as
\begin{align*}
 \Exp\left[\frac{\partial}{\partial \theta(x)}\log p(s) \frac{\partial}{\partial \eta(y)}\log p(s)\right]
 &= \sum_{s \in S}\left[\,p(s) \frac{\partial}{\partial \theta(x)} \sum_{u \in S} \zeta(u, s)\theta(u) \frac{\partial}{\partial \eta(y)} \log\left(\sum_{u \in S} \mu(s, u)\eta(u)\right)\,\right]\\
 &= \sum_{s \in S}\left[\,p(s) \left(\zeta(x, s) - \eta(x)\right) \frac{\mu(s, y)}{p(s)}\,\right]\\
 &= \sum_{s \in S} \zeta(x, s)\mu(s, y) = \delta_{xy}.
\end{align*}
The last equation holds from Equation~\eqref{eq:muzero}, hence the M{\"o}bius inversion directly leads to the orthogonality.

The \emph{Bregman divergence} is known to be the canonical divergence~\citep[Section~6.6]{Amari16} to measure the difference between two distributions $P$ and $Q$ on a dually flat manifold, which is defined as
\begin{align*}
 D\left[P, Q\right] = \psi(\theta_P) + \phi(\eta_Q) - \theta_P \eta_Q.
\end{align*}
In our case, since we have
$\phi(\eta_Q) = \sum_{x \in S} q(x) \log q(x)$ and
$\theta_P\eta_Q - \psi(\theta_P) = \sum_{x \in S} q(x) \log p(x)$
from Theorem~\ref{theorem:negentropy} and Equation~\eqref{eq:relentropy},
it is given as
\begin{align*}
 D\left[P, Q\right]
 = \sum_{x \in S} q(x) \log\frac{q(x)}{p(x)},
\end{align*}
which coincides with the \emph{Kullback--Leibler divergence} (KL divergence) from $Q$ to $P$: $D\left[P, Q\right] = \KL\left[Q, P\right]$.

\subsubsection{Riemannian Structure}\label{subsubsec:riemann}
Next we analyze the Riemannian structure on $\vec{\Sc}$ and show that the M{\"o}bius inversion formula enables us to compute the Riemannian metric of $\vec{\Sc}$.
\begin{theorem}[Riemannian metric]
 \label{theorem:metric}
 The manifold $(\vec{\Sc}, g(\xi))$ is a Riemannian manifold with the Riemannian metric $g(\xi)$ such that for all $x, y \in S^+$
 \begin{align*}
 g_{xy}(\xi) = \left\{
 \begin{array}{*2{>{\displaystyle}l}}
  \sum_{s \in S} \zeta(x, s)\zeta(y, s)p(s) - \eta(x)\eta(y) & \text{if } \xi = \theta,\vspace*{5pt}\\
  \sum_{s \in S} \mu(s, x)\mu(s, y) p(s)^{-1} & \text{if } \xi = \eta.
 \end{array}
 \right.
 \end{align*}
\end{theorem}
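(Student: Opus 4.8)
The plan is to exploit the dually flat structure established in Section~\ref{subsubsec:dual}: the Riemannian metric of $\vec{\Sc}$ is the Fisher information metric, whose coordinate representation is the Hessian of the potential, i.e. $g_{xy}(\theta) = \partial^2\psi/\partial\theta(x)\partial\theta(y) = \partial\eta(x)/\partial\theta(y)$ in the $\theta$-coordinates (using $\nabla\psi(\theta)=\eta$ from Equation~\eqref{eq:psiphi}), and $g_{xy}(\eta) = \partial^2\phi/\partial\eta(x)\partial\eta(y) = \partial\theta(x)/\partial\eta(y)$ in the $\eta$-coordinates (using $\nabla\phi(\eta)=\theta$). So the whole computation reduces to differentiating the already-known first derivatives once more and simplifying with the M{\"o}bius inversion formula.

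For the $\theta$-case, I would start from $\eta(x) = \sum_{s \in S}\zeta(x,s)p(s)$ (Equation~\eqref{eq:eta}) and write $g_{xy}(\theta) = \sum_{s\in S}\zeta(x,s)\,\partial p(s)/\partial\theta(y)$. Since $\log p(s) = \sum_{\bot < u \le s}\theta(u) - \psi(\theta)$ and $\partial\psi/\partial\theta(y) = \eta(y)$ for $y \in S^+$, we get $\partial\log p(s)/\partial\theta(y) = \zeta(y,s) - \eta(y)$, hence $\partial p(s)/\partial\theta(y) = p(s)\bigl(\zeta(y,s) - \eta(y)\bigr)$. Substituting and using $\sum_{s}\zeta(x,s)p(s) = \eta(x)$ kills the cross term and leaves exactly $\sum_{s}\zeta(x,s)\zeta(y,s)p(s) - \eta(x)\eta(y)$; equivalently this is $\mathrm{Cov}_p\bigl[\zeta(x,\cdot),\zeta(y,\cdot)\bigr]$, which is also what the Fisher-metric/orthogonality computation preceding the theorem produces.

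For the $\eta$-case, I would differentiate $p(s) = \sum_{u \in S}\mu(s,u)\eta(u)$ (Equation~\eqref{eq:p}). Here the coordinate $\eta(\bot) = 1$ is a constant and the free coordinates are $\eta(u)$, $u \in S^+$, so $\partial p(s)/\partial\eta(y) = \mu(s,y)$ for every $s \in S$ and $y \in S^+$, giving $\partial\log p(s)/\partial\eta(y) = \mu(s,y)/p(s)$. Then from $\theta(x) = \sum_{s}\mu(s,x)\log p(s)$ (Equation~\eqref{eq:theta}) we obtain $g_{xy}(\eta) = \sum_{s\in S}\mu(s,x)\mu(s,y)p(s)^{-1}$. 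As an internal check one can verify $\sum_{y \in S^+}g_{xy}(\theta)g_{yz}(\eta) = \delta_{xz}$ directly from the M{\"o}bius relations of Equation~\eqref{eq:muzero}, confirming the two matrices are mutually inverse.

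It remains to note that $g(\xi)$ is genuinely a Riemannian metric, i.e. symmetric positive definite; symmetry is immediate, and for positive definiteness I would observe that $g(\theta)$ is the covariance matrix under $p$ of the random vector $\bigl(\zeta(x,\cdot)\bigr)_{x\in S^+}$: since the full zeta matrix $(\zeta(x,s))_{x,s\in S}$ is invertible (unitriangular along any linear extension of $\le$) and $p(s) > 0$ for all $s$, no nontrivial linear combination of the indicators $\zeta(x,\cdot)$, $x\in S^+$, is $p$-almost-surely constant, so the covariance matrix is strictly positive definite, and $g(\eta) = g(\theta)^{-1}$ follows. The only place demanding care — and the step I expect to be the main nuisance rather than a deep obstacle — is the bookkeeping around $\bot$: the sums defining $\theta,\eta,\psi,\phi$ run over all of $S$ while the coordinate indices run over $S^+$, so one must consistently treat $\eta(\bot) = 1$ as fixed and $\psi = -\theta(\bot)$ as a dependent quantity; once this is set up, each case is a one-line differentiation followed by a single use of M{\"o}bius inversion.
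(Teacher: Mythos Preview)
Your proposal is correct and follows essentially the same route as the paper: both compute $g_{xy}(\theta)=\partial\eta/\partial\theta$ and $g_{xy}(\eta)=\partial\theta/\partial\eta$ from the Hessians of the dual potentials and simplify via the M{\"o}bius inversion formula~\eqref{eq:muzero}. Your treatment is in fact slightly more complete, since the paper's proof does not verify positive definiteness of $g$ (nor the mutual-inverse check), even though the theorem asserts that $(\vec{\Sc},g(\xi))$ is a Riemannian manifold; your covariance/unitriangular argument fills that gap cleanly.
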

\begin{proof}
 Since the Riemannian metric is defined as
\begin{align*}
 g(\theta) = \nabla\nabla\psi(\theta),\quad g(\eta) = \nabla\nabla\phi(\eta),
\end{align*}
when $\xi = \theta$ we have
\begin{align*}
 g_{xy}(\theta) &= \frac{\partial^2}{\partial \theta(x) \partial \theta(y)} \psi(\theta) = \frac{\partial}{\partial \theta(x)} \eta(y)
 = \frac{\partial}{\partial \theta(x)} \sum_{s \in S} \zeta(y, s)\exp\left(\sum_{\bot < u \le s} \theta(u) - \psi(\theta)\right)\\
 &= \sum_{s \in S} \zeta(x, s)\zeta(y, s)p(s) - \eta(x)\eta(y).
\end{align*}
When $\xi = \eta$, it follows that
\begin{align*}
 g_{x, y}(\eta) &= \frac{\partial^2}{\partial \eta(x) \partial \eta(y)} \phi(\eta) = \frac{\partial}{\partial \eta(x)} \theta(y)\\
 &= \frac{\partial}{\partial \eta(x)} \sum_{s \le y} \mu(s, y) \log p(s)
 = \frac{\partial}{\partial \eta(x)} \sum_{s \le y} \mu(s, y) \log \left(\sum_{u \ge s} \mu(s, u) \eta(u)\right)\\
 &= \sum_{s \in S} \frac{\mu(s, x)\mu(s, y)}{\sum_{u \ge s} \mu(s, u)\eta(u)}
 = \sum_{s \in S} \mu(s, x)\mu(s, y) p(s)^{-1}.\qedhere
\end{align*}
\end{proof}
\noindent Since $g(\xi)$ coincides with the Fisher information matrix,
\begin{align*}
 \Exp\left[\,\frac{\partial}{\partial \theta(x)} \log p(s) \frac{\partial}{\partial \theta(y)} \log p(s)\,\right] &= g_{xy}(\theta),\\
 \Exp\left[\,\frac{\partial}{\partial \eta(x)} \log p(s) \frac{\partial}{\partial \eta(y)} \log p(s)\,\right] &= g_{xy}(\eta).
\end{align*}
Then the Riemannian (Levi--Chivita) connection $\Gamma(\xi)$ with respect to $\xi$, which is defined as
\begin{align*}
 \Gamma_{xyz}(\xi) = \frac{1}{2}\left(\, \frac{\partial g_{yz}(\xi)}{\partial \xi(x)} + \frac{\partial g_{xz}(\xi)}{\partial \xi(y)} - \frac{\partial g_{xy}(\xi)}{\partial \xi(z)}\,\right)
\end{align*}
for all $x, y, z \in S^+$, can be analytically obtained.
\begin{theorem}[Riemannian connection]
 The Riemannian connection $\Gamma(\xi)$ on the manifold $(\vec{\Sc}, g(\xi))$ is given in the following for all $x, y, z \in S^+$,
 \begin{align*}
  \Gamma_{xyz}(\xi) = \left\{
  \begin{array}{*2{>{\displaystyle}l}}
   \phantom{-}\frac{1}{2}\sum_{s \in S}\phantom{}\big(\zeta(x, s) - \eta(x)\big)\big(\zeta(y, s) - \eta(y)\big)\big(\zeta(z, s) - \eta(z)\big) p(s) & \text{if } \xi = \theta,\\
   -\frac{1}{2}\sum_{s \in S} \mu(s, x)\mu(s, y)\mu(s, z) p(s)^{-2} & \text{if } \xi = \eta.
  \end{array}
  \right.
 \end{align*}
\end{theorem}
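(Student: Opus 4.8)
The plan is to reduce the whole statement to differentiating the two metric tensors of Theorem~\ref{theorem:metric} a single time, and then to exploit a symmetry. In the defining formula $\Gamma_{xyz}(\xi) = \tfrac12\bigl(\partial g_{yz}(\xi)/\partial\xi(x) + \partial g_{xz}(\xi)/\partial\xi(y) - \partial g_{xy}(\xi)/\partial\xi(z)\bigr)$, if the first-order quantity $\partial g_{yz}(\xi)/\partial\xi(x)$ turns out to be invariant under every permutation of $\{x,y,z\}$, then the three summands coincide and the alternating combination collapses to $\tfrac12\,\partial g_{yz}(\xi)/\partial\xi(x)$; the theorem then reduces to that one derivative plus a routine rewriting into the claimed ``centered'' form. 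Note that $\nabla$ is the gradient with respect to $S^+$, so every derivative is taken with respect to $\xi(x)$ for $x\in S^+$, and $\eta(\bot)=1$ is held fixed.

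For $\xi=\theta$, I would first record the two elementary derivatives $\partial p(s)/\partial\theta(x) = p(s)\bigl(\zeta(x,s)-\eta(x)\bigr)$ --- which follows from $\log p(s) = \sum_{\bot<u\le s}\theta(u) - \psi(\theta)$ together with $\nabla\psi(\theta)=\eta$ from Equation~\eqref{eq:psiphi} --- and $\partial\eta(y)/\partial\theta(x) = g_{xy}(\theta)$, already established in the proof of Theorem~\ref{theorem:metric}. Differentiating $g_{yz}(\theta) = \sum_{s\in S}\zeta(y,s)\zeta(z,s)p(s) - \eta(y)\eta(z)$ with these, and abbreviating $m_{xyz} = \sum_s \zeta(x,s)\zeta(y,s)\zeta(z,s)p(s)$, $m_{xy} = \sum_s\zeta(x,s)\zeta(y,s)p(s)$, $m_x = \eta(x)$, a short calculation gives
\[
 \frac{\partial g_{yz}(\theta)}{\partial\theta(x)} = m_{xyz} - m_x m_{yz} - m_y m_{xz} - m_z m_{xy} + 2\,m_x m_y m_z,
\]
which is visibly symmetric in $x,y,z$. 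It then remains to expand $\sum_s\bigl(\zeta(x,s)-\eta(x)\bigr)\bigl(\zeta(y,s)-\eta(y)\bigr)\bigl(\zeta(z,s)-\eta(z)\bigr)p(s)$, using $\sum_s p(s)=1$ and $\sum_s\zeta(x,s)p(s)=\eta(x)$, and verify it equals the same polynomial in the $m$'s; this finishes the $\theta$ case.

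For $\xi=\eta$ the work is lighter: from $p(s) = \sum_{u}\mu(s,u)\eta(u)$ (Equation~\eqref{eq:p}) one has $\partial p(s)/\partial\eta(x) = \mu(s,x)$, hence $\partial\bigl(p(s)^{-1}\bigr)/\partial\eta(x) = -\mu(s,x)p(s)^{-2}$, so differentiating $g_{yz}(\eta) = \sum_s\mu(s,y)\mu(s,z)p(s)^{-1}$ gives $\partial g_{yz}(\eta)/\partial\eta(x) = -\sum_s\mu(s,x)\mu(s,y)\mu(s,z)p(s)^{-2}$, once more symmetric in the three indices, and the claimed expression for $\Gamma_{xyz}(\eta)$ drops out. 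The only point requiring care is the moment-identity bookkeeping in the $\theta$ case --- routine but easy to slip a sign or a constant --- and the conceptual heart is really the permutation symmetry of $\partial g_{yz}(\xi)/\partial\xi(x)$, after which there is no genuine obstacle.
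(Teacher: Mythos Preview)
Your proposal is correct and follows essentially the same route as the paper's own proof: both compute $\partial g_{yz}(\xi)/\partial\xi(x)$ directly from the expressions in Theorem~\ref{theorem:metric}, observe that the result is fully symmetric in $x,y,z$, and conclude that the three-term Christoffel combination collapses to a single derivative with the factor~$\tfrac12$. Your presentation is slightly more explicit about the symmetry step and introduces the moment shorthands $m_{xyz},m_{xy},m_x$ for bookkeeping, whereas the paper jumps directly from the raw derivative to the centered product form; but these are cosmetic differences, not different arguments.
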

\begin{proof}
 We have for all $x, y, z \in S$,
 \begin{align*}
  \frac{\partial g_{y,z}(\theta)}{\partial \theta(x)}
  = \frac{\partial}{\partial \theta(x)}\sum_{s \in S}\zeta(y, s)\zeta(z, s)p(s) - \frac{\partial}{\partial \theta(x)}\eta(y)\eta(z),
 \end{align*}
 where
 \begin{align*}
  \frac{\partial}{\partial \theta(x)}\sum_{s \in S}\zeta(y, s)\zeta(z, s)p(s)
  &= \frac{\partial}{\partial \theta(x)} \sum_{s \in S}\zeta(x, s)\zeta(y, s)\zeta(z, s) \exp\left(\sum_{\bot < u \le s} \theta(u) - \psi(\theta)\right)\\
  &= \sum_{s \in S}\zeta(x, s)\zeta(y, s)\zeta(z, s)p(s) - \eta(x)\sum_{s \in S}\zeta(y, s)\zeta(z, s)p(s)
 \end{align*}
 and
 \begin{align*}
  \frac{\partial}{\partial \theta(x)}\eta(y)\eta(z)
  &= \frac{\partial \eta(y)}{\partial \theta(x)}\eta(z) + \frac{\partial \eta(z)}{\partial \theta(x)}\eta(y)\\
  &= \eta(z)\sum_{s \in S}\zeta(x, s)\zeta(y, s)p(s) + \eta(y)\sum_{s \in S}\zeta(x, s)\zeta(z, s)p(s) - 2\eta(x)\eta(y)\eta(z).
 \end{align*}
 It follows that
 \begin{align*}
  \frac{\partial g_{y,z}(\theta)}{\partial \theta(x)}
  = \sum_{s \in S} \big(\zeta(x, s) - \eta(x)\big)\big(\zeta(y, s) - \eta(y)\big)\big(\zeta(z, s) - \eta(z)\big) p(s).
 \end{align*}
 On the other hand,
 \begin{align*}
  \frac{\partial g_{y,z}(\eta)}{\partial \eta(x)}
  &= \frac{\partial}{\partial \eta(x)} \sum_{s \in S} \mu(s, y)\mu(s, z)p(s)^{-1}
  = \frac{\partial}{\partial \eta(x)} \sum_{s \in S} \mu(s, y)\mu(s, z) \left(\sum_{u \ge s} \mu(s, u) \eta(s)\right)^{-1}\\
  &= -\sum_{s \in S} \mu(s, x)\mu(s, y)\mu(s, z) \left(\sum_{u \ge s} \mu(s, u) \eta(s)\right)^{-2}
  = -\sum_{s \in S} \mu(s, x)\mu(s, y)\mu(s, z) p(s)^{-2}.
 \end{align*}
Therefore, from the definition of $\Gamma(\xi)$, it follows that
 \begin{align*}
  \Gamma_{x,y,z}(\theta)
  &= \frac{1}{2}\left(\, \frac{\partial g_{y,z}(\theta)}{\partial \theta(x)} + \frac{\partial g_{x,z}(\theta)}{\partial \theta(y)} - \frac{\partial g_{x,y}(\theta)}{\partial \theta(z)}\,\right)\\
  &= \frac{1}{2}\sum_{s \in S} \big(\zeta(s, x) - \eta(x)\big)\big(\zeta(s, y) - \eta(y)\big)\big(\zeta(s, z) - \eta(z)\big) p(s),\\
  \Gamma_{x,y,z}(\eta)
  &= \frac{1}{2}\left(\, \frac{\partial g_{y,z}(\eta)}{\partial \eta(x)} + \frac{\partial g_{x,z}(\eta)}{\partial \eta(y)} - \frac{\partial g_{x,y}(\eta)}{\partial \eta(z)}\,\right)\\
  &= -\frac{1}{2}\sum_{s \in S} \mu(s, x)\mu(s, y)\mu(s, z) p(s)^{-2}.\qedhere
 \end{align*}
 \end{proof}

\subsection{The Projection Algorithm}\label{subsec:proj}
Projection of a distribution onto a submanifold is essential; several machine learning algorithms are known to be formulated as projection of a distribution empirically estimated from data onto a submanifold that is specified by the target model~\cite{Amari16}.
Here we define projection of distributions on posets and show that Newton's method can be applied to perform projection as the Jacobian matrix can be analytically computed.

\subsubsection{Definition}\label{subsubec:decomp}
Let $\vec{\Sc}(\beta)$ be a submanifold of $\vec{\Sc}$ such that
\begin{align}
 \label{eq:submanifold}
 \vec{\Sc}(\beta) = \Set{P \in \vec{\Sc} | \theta_P(x) = \beta(x) \text{ for all } x \in \dom(\beta)}
\end{align}
specified by a function $\beta$ with $\dom(\beta) \subseteq S^+$.
Projection of $P \in \vec{\Sc}$ onto $\vec{\Sc}(\beta)$, called \emph{$m$-projection}, which is defined as the distribution $P_{\beta} \in \vec{\Sc}(\beta)$ such that
\begin{align*}
 \left\{
 \begin{array}{ll}
  \theta_{P_{\beta}}(x) = \beta(x) & \text{if } x \in \dom(\beta), \\
  \eta_{P_{\beta}}(x) = \eta_P(x) & \text{if } x \in S^+ \setminus \dom(\beta),
 \end{array}
 \right.
\end{align*}
is the minimizer of the KL divergence from $P$ to $\vec{\Sc}(\beta)$:
\begin{align*}
 P_\beta &= \argmin_{Q \in \vec{\Sc}(\beta)} \KL[P, Q].
\end{align*}
The dually flat structure with the coordinate systems $\theta$ and $\eta$ guarantees that the projected distribution $P_{\beta}$ always exists and is unique~\citep[Theorem~3]{Amari09}.
Moreover, the \emph{Pythagorean theorem} holds in the dually flat manifold, that is, for any $Q \in \vec{\Sc}(\beta)$ we have
\begin{align*}
 \KL[P, Q] = \KL[P, P_{\beta}] + \KL[P_{\beta}, Q].
\end{align*}
We can switch $\eta$ and $\theta$ in the submanifold $\vec{\Sc}(\beta)$ by changing $\KL[P, Q]$ to $\KL[Q, P]$, where the projected distribution $P_{\beta}$ of $P$ is given as
\begin{align*}
 \left\{
 \begin{array}{ll}
  \theta_{P_{\beta}}(x) = \theta_P(x) & \text{if } x \in S^+ \setminus \dom(\beta), \\
  \eta_{P_{\beta}}(x) = \beta(x) & \text{if } x \in \dom(\beta),
 \end{array}
 \right.
\end{align*}
This projection is called \emph{$e$-projection}.

\begin{example}[Boltzmann machine]
 Given a Boltzmann machine represented as an undirected graph $G = (V, E)$ with a vertex set $V$ and an edge set $E \subseteq \{ \{i, j\} \mid i, j \in V\}$.
 The set of probability distributions that can be modeled by a Boltzmann machine $G$ coincides with the submanifold
 \begin{align*}
  \vec{\Sc}_{\mathrm{B}} = \Set{P \in \vec{\Sc} | \theta_P(x) = 0 \text{ if } |x| > 2 \text{ or } x \not\in E},
 \end{align*}
 with $S = 2^V$.
 Let $\hat{P}$ be an empirical distribution estimated from a given dataset.
 The learned model is the $m$-projection of the empirical distribution $\hat{P}$ onto $\vec{\Sc}_{\mathrm{B}}$, where the resulting distribution $P_{\beta}$ is given as
 \begin{align*}
  \left\{
  \begin{array}{ll}
   \theta_{P_{\beta}}(x) = 0 & \text{if } |x| > 2 \text{ or } x \not\in E, \\
   \eta_{P_{\beta}}(x) = \eta_{\hat{P}}(x) & \text{if } |x| = 1 \text{ or } x \in E.
  \end{array}
  \right.
 \end{align*}
\end{example}

\subsubsection{Computation}\label{subsubsec:newton}
Here we show how to compute projection of a given probability distribution.
We show that Newton's method can be used to efficiently compute the projected distribution $P_{\beta}$ by iteratively updating $P_{\beta}^{(0)} = P$ as $P_{\beta}^{(0)}, P_{\beta}^{(1)}, P_{\beta}^{(2)}, \dots$ until converging to $P_{\beta}$.

Let us start with the $m$-projection with initializing $P_{\beta}^{(0)} = P$.
In each iteration $t$, we update $\theta_{P_{\beta}}^{(t)}(x)$ for all $x \in \dom{\beta}$ while fixing $\eta_{P_{\beta}}^{(t)}(x) = \eta_P(x)$ for all $x \in S^+ \setminus \dom(\beta)$, which is possible from the orthogonality of $\theta$ and $\eta$.
Using Newton's method, $\eta_{P_\beta}^{(t + 1)}(x)$ should satisfy
\begin{align*}
 \left(\theta_{P_{\beta}}^{(t)}(x) - \beta(x)\right) + \sum_{\mathclap{y \in \dom(\beta)}} J_{xy} \left(\eta_{P_{\beta}}^{(t + 1)}(y) - \eta_{P_{\beta}}^{(t)}(y)\right) = 0,
\end{align*}
for every $x \in \dom(\beta)$, where $J_{xy}$ is an entry of the $|\dom(\beta)| \times |\dom(\beta)|$ Jacobian matrix $J$ and given as
\begin{align*}
 J_{xy} = \frac{\partial \theta_{P_{\beta}}^{(t)}(x)}{\partial \eta_{P_{\beta}}^{(t)}(y)} = \sum_{s \in S} \mu(s, x)\mu(s, y) p_{\beta}^{(t)}(s)^{-1}
\end{align*}
from Theorem~\ref{theorem:metric}.
Therefore, we have the update formula for all $x \in \dom(\beta)$ as
\begin{align*}
 \eta_{P_{\beta}}^{(t + 1)}(x) &= \eta_{P_{\beta}}^{(t)}(x) - \sum_{\mathclap{y \in \dom(\beta)}} J_{xy}^{-1}\left(\theta_{P_{\beta}}^{(t)}(y) - \beta(y)\right).
\end{align*}
In $e$-projection, update $\eta_{P_\beta}^{(t)}(x)$ for $x \in \dom(\beta)$ while fixing $\theta_{P_\beta}^{(t)}(x) = \theta_P(x)$ for all $x \in S^+ \setminus \dom(\beta)$.
To ensure $\eta_{P_\beta}^{(t)}(\bot) = 1$, we add $\bot$ to $\dom(\beta)$ and $\beta(\bot) = 1$.
We update $\theta_{P_\beta}^{(t)}(x)$ at each step $t$ as
\begin{align*}
 \theta_{P_{\beta}}^{(t + 1)}(x) &= \theta_{P_{\beta}}^{(t)}(x) - \sum_{\mathclap{y \in \dom(\beta)}} {J'}_{xy}^{-1}\left(\eta_{P_{\beta}}^{(t)}(y) - \beta(y)\right),\\
 {J'}_{xy} &= \frac{\partial \eta_{P_{\beta}}^{(t)}(x)}{\partial\theta_{P_{\beta}}^{(t)}(y)} = \sum_{s \in S} \zeta(x, s)\zeta(y, s)p_{\beta}^{(t)}(s) - |S|\eta_{P_{\beta}}^{(t)}(x)\eta_{P_{\beta}}^{(t)}(y).
\end{align*}
In this case, we also need to update $\theta_{P_{\beta}}^{(t)}(\bot)$ as it is not guaranteed to be fixed.
Let us define
\begin{align*}
 p_{\beta}'^{(t + 1)}(x) &= p_{\beta}^{(t)}(x)\prod_{\mathclap{s \in \dom(\beta)}}\hspace*{5pt} \frac{\exp\left(\theta_{P_{\beta}}^{(t + 1)}(s)\right)}{\exp\left(\theta_{P_{\beta}}^{(t)}(s)\right)} \,\zeta(s, x).
\end{align*}
Since we have
\begin{align*}
 p_{\beta}^{(t + 1)}(x) &= \frac{\exp\left(\theta_{P_{\beta}}^{(t + 1)}(\bot)\right)}{\exp\left(\theta_{P_{\beta}}^{(t)}(\bot)\right)} p_{\beta}'^{(t + 1)}(x),
\end{align*}
it follows that
\begin{align*}
 \theta_{P_{\beta}}^{(t + 1)}(\bot) - \theta_{P_{\beta}}^{(t)}(\bot) = - \log\left(\exp\left(\theta_{P_{\beta}}^{(t)}(\bot)\right) +\sum_{x \in S^+}p_{\beta}'^{(t + 1)}(x)\right),
\end{align*}
The time complexity of each iteration is $O(|\dom(\beta)|^3)$, which is required to compute the inverse of the Jacobian matrix.

Global convergence of the projection algorithm is always guaranteed by the convexity of a submanifold $\vec{\Sc}(\beta)$ defined in Equation~\eqref{eq:submanifold}.
Since $\vec{\Sc}(\beta)$ is always convex with respect to the $\theta$- and $\eta$-coordinates, it is straightforward to see that our $e$-projection is an instance of the \emph{Bregman algorithm} onto a convex region, which is well known to always converge to the global solution~\cite{Censor81}.

\section{Balancing Matrices and Tensors}\label{sec:balance}
Now we are ready to solve the problem of matrix and tensor balancing as projection on a dually flat manifold.

\subsection{Matrix Balancing}\label{subsec:matrix}
Recall that the task of matrix balancing is to find $\vec{r}, \vec{s} \in \R^{n}$ that satisfy $(RAS)\vec{1} = \vec{1}$ and $(RAS)^T\vec{1} = \vec{1}$ with $R = \diag(\vec{r})$ and $S = \diag(\vec{s})$
for a given nonnegative square matrix $A = (a_{ij}) \in \R^{n \times n}_{\ge 0}$.

Let us define $S$ as
\begin{align}
 \label{eq:matrix_S}
 S = \Set{(i, j) | i, j \in [n] \text{ and } a_{ij} \not= 0},
\end{align}
where we remove zero entries from the outcome space $S$ as our formulation cannot treat zero probability, and give each probability as $p((i, j)) = a_{ij} / \sum_{ij} a_{ij}$.
The partial order $\le$ of $S$ is naturally introduced as
\begin{align}
 \label{eq:matrix_order}
 x = (i, j) \le y = (k, l) \Leftrightarrow i \le j \text{ and } k \le l,
\end{align}
resulting in $\bot = (1, 1)$.
In addition, we define $\vec{\iota}_{k, m}$ for each $k \in [n]$ and $m \in \{1, 2\}$ such that
\begin{align*}
 \vec{\iota}_{k, m} = \min\Set{x = (i_1, i_2) \in S | i_m = k},
\end{align*}
where the minimum is with respect to the order $\le$.
If $\vec{\iota}_{k, m}$ does not exist, we just remove the entire $k$th row if $m = 1$ or $k$th column if $m = 2$ from $A$.
Then we switch rows and columns of $A$ so that the condition
\begin{align}
 \label{eq:index_condition}
 \vec{\iota}_{1, m} \le \vec{\iota}_{2, m} \le \dots \le \vec{\iota}_{n, m}
\end{align}
is satisfied for each $m \in \{1, 2\}$, which is possible for any matrices. 
Since we have
\begin{align*}
 \eta(\vec{\iota}_{k, m}) - \eta(\vec{\iota}_{k + 1, m}) = \left\{
 \begin{array}{ll}
  \sum_{j = 1}^{n} p((k, j)) & \text{if } m = 1, \\[3pt]
  \sum_{i = 1}^{n} p((i, k)) & \text{if } m = 2
 \end{array}
 \right.
\end{align*}
if the condition~\eqref{eq:index_condition} is satisfied, the probability distribution is balanced if for all $k \in [n]$ and $m \in \{1, 2\}$
\begin{align*}
 \eta(\vec{\iota}_{k, m}) = \frac{n \!-\! k \!+\! 1}{n}.
\end{align*}
Therefore, we obtain the following result.

\paragraph{Matrix balancing as \textit{e}-projection:}
Given a matrix $A \in \R^{n \times n}$ with its normalized probability distribution $P \in \vec{\Sc}$ such that $p((i, j)) = a_{ij} / \sum_{ij} a_{ij}$.
Define the poset $(S, \le)$ by Equations~\eqref{eq:matrix_S} and~\eqref{eq:matrix_order} and let $\vec{\Sc}(\beta)$ be the submanifold of $\vec{\Sc}$ such that
\begin{align*}
 \vec{\Sc}(\beta) = \Set{P \in \vec{\Sc} | \eta_P(x) = \beta(x) \text{ for all } x \in \dom(\beta)},
\end{align*}
where the function $\beta$ is given as
\begin{align*}
 \dom(\beta) &= \Set{\vec{\iota}_{k, m} \in S | k \in [n], m \in \{1, 2\}},\\
 \beta(\vec{\iota}_{k, m}) &= \frac{n \!-\! k \!+\! 1}{n}.
\end{align*}
Matrix balancing is the $e$-projection of $P$ onto the submanifold $\vec{\Sc}(\beta)$, that is, the balanced matrix $(RAS) / n$ is the distribution $P_{\beta}$ such that
\begin{align*}
 \left\{
 \begin{array}{ll}
  \theta_{P_{\beta}}(x) = \theta_P(x) & \text{if } x \in S^+ \setminus \dom(\beta),\\
  \eta_{P_{\beta}}(x) = \beta(x) & \text{if } x \in \dom(\beta),
 \end{array}
 \right.
\end{align*}
which is unique and always exists in $\vec{\Sc}$, thanks to its dually flat structure.
Moreover, two balancing vectors $\vec{r}$ and $\vec{s}$ are
\begin{align*}
 \exp\left(\,\sum_{k = 1}^i \theta_{P_\beta}(\vec{\iota}_{k, m}) - \theta_{P}(\vec{\iota}_{k, m})\right) = \left\{
 \begin{array}{ll}
  r_i & \text{if } m = 1,\\
  a_i & \text{if } m = 2,
 \end{array}
 \right.
\end{align*}
for every $i \in [n]$ and $\vec{r} = \vec{r} n / \sum_{ij} a_{ij}$.\hfill$\blacksquare$

\subsection{Tensor Balancing}\label{subsec:tensor}
Next, we generalize our approach from matrices to tensors.
For an $N$th order tensor $A = (a_{i_1 i_2 \dots i_N}) \in \R^{n_1 \times n_2 \times \dots \times n_N}$ and a vector $\vec{b} \in \R^{n_m}$, the $m$-mode product of $A$ and $\vec{b}$ is defined as
\begin{align*}
 \left(A \times_m \vec{b}\right)_{i_1 \dots i_{m - 1} i_{m + 1} \dots i_N} = \sum_{i_m = 1}^{n_m} a_{i_1 i_2 \dots i_N} b_{i_m}.
\end{align*}
We define \emph{tensor balancing} as follows:
Given a tensor $A \in \R^{n_1 \times n_2 \times \dots \times n_N}$ with $n_1 = \dots = n_N = n$,
find $(N - 1)$ order tensors $R^1, R^2, \dots, R^N$ such that
\begin{align}
 \label{eq:tensor_balancing}
 A' \times_{m} \vec{1} = \vec{1}\quad (\in \R^{n_1 \times \dots \times n_{m - 1} \times n_{m + 1} \times \dots \times n_{N}})
\end{align}
for all $m \in [N]$, i.e., $\sum_{i_m = 1}^{n} a_{i_1 i_2\dots i_N}' = 1$, where each entry $a'_{i_1 i_2\dots i_N}$ of the balanced tensor $A'$ is given as
\begin{align*}
 a'_{i_1 i_2\dots i_N} = a_{i_1 i_2 \dots i_N}\ \prod_{\mathclap{m \in [N]}} R^m_{i_1 \dots i_{m - 1} i_{m + 1} \dots i_N}.
\end{align*}
A tensor $A'$ that satisfies Equation~\eqref{eq:tensor_balancing} is called \emph{multistochastic}~\cite{Cui14}.
Note that this is exactly the same as the matrix balancing problem if $N = 2$.

It is straightforward to extend matrix balancing to tensor balancing as $e$-projection onto a submanifold.
Given a tensor $A \in \R^{n_1 \times n_2 \times \dots \times n_N}$ with its normalized probability distribution $P$ such that
\begin{align}
 \label{eq:normalize_tensor}
 p(x) = a_{i_1 i_2\dots i_N} \,\Big/\, \sum_{\mathclap{j_1 j_2 \dots j_N}} a_{j_1 j_2\dots j_N}
\end{align}
for all $x = (i_1, i_2, \dots, i_N)$.
The objective is to obtain $P_{\beta}$ such that $\sum_{i_m = 1}^n p_{\beta}( (i_1, \dots, i_N)) = 1/(n^{N - 1})$ for all $m \in [N]$ and $i_1, \dots, i_N \in [n]$.
In the same way as matrix balancing, we define $S$ as
\begin{align*}
 S = \Set{(i_1, i_2, \dots, i_N) \in [n]^N | a_{i_1 i_2\dots i_N} \not= 0}
\end{align*}
with removing zero entries and the partial order $\le$ as
\begin{align*}
 x = (i_1 \dots i_N) \le y = (j_1 \dots j_N) \Leftrightarrow \forall m \in [N], i_m \le j_m.
\end{align*}
In addition, we introduce $\vec{\iota}_{k, m}$ as
\begin{align*}
 \vec{\iota}_{k, m} = \min\Set{x = (i_1, i_2, \dots, i_N) \in S | i_m = k}.
\end{align*}
and require the condition in Equation~\eqref{eq:index_condition}.

\paragraph{Tensor balancing as \textit{e}-projection:}
Given a tensor $A \in \R^{n_1 \times n_2 \times \dots \times n_N}$ with its normalized probability distribution $P \in \vec{\Sc}$ given in Equation~\eqref{eq:normalize_tensor}.
The submanifold $\vec{\Sc}(\beta)$ of multistochastic tensors is given as
\begin{align*}
 \vec{\Sc}(\beta) = \Set{P \in \vec{\Sc} | \eta_P(x) = \beta(x) \text{ for all } x \in \dom(\beta)},
\end{align*}
where the domain of the function $\beta$ is given as
\begin{align*}
 \dom(\beta) = \Set{\vec{\iota}_{k, m} | k \in [n], m \in [N]}
\end{align*}
and each value is described using the zeta function as
\begin{align*}
 \beta(\vec{\iota}_{k, m}) = \sum_{l \in [n]}\zeta(\vec{\iota}_{k, m}, \vec{\iota}_{l, m}) \frac{1}{n^{N - 1}}.
\end{align*}
Tensor balancing is the $e$-projection of $P$ onto the submanifold $\vec{\Sc}(\beta)$, that is, the multistochastic tensor is the distribution $P_{\beta}$ such that
\begin{align*}
 \left\{
 \begin{array}{ll}
  \theta_{P_{\beta}}(x) = \theta_P(x) & \text{if } x \in S^+ \setminus \dom(\beta),\\
  \eta_{P_{\beta}}(x) = \beta(x) & \text{if } x \in \dom(\beta),
 \end{array}
 \right.
\end{align*}
which is unique and always exists in $\vec{\Sc}$, thanks to its dually flat structure.
Moreover, each balancing tensor $R^{m}$ is
\begin{align*}
 R^m_{i_1\dots i_{m - 1}i_{m + 1}\dots i_N} = \exp\left(\sum_{m' \not= m}\sum_{k = 1}^{i_{m'}} \theta_{P_\beta}(\vec{\iota}_{k, m'}) - \theta_{P}(\vec{\iota}_{k, m'})\right)
\end{align*}
for every $m \in [N]$ and $R^1 = R^1 n^{N - 1} / \sum_{j_1\dots j_N} a_{j_1\dots j_N}$ to recover a multistochastic tensor.\hfill$\blacksquare$
\vspace*{5pt}

Our result means that the $e$-projection algorithm based on Newton's method proposed in Section~\ref{subsec:proj} converges to the unique balanced tensor whenever $\vec{\Sc}(\beta) \not = \emptyset$ holds.

\section{Conclusion}\label{sec:conclusion}
In this paper, we have solved the open problem of tensor balancing and presented an efficient balancing algorithm using Newton's method.
Our algorithm quadratically converges, while the popular Sinkhorn-Knopp algorithm linearly converges.
We have examined the efficiency of our algorithm in numerical experiments on matrix balancing and showed that the proposed algorithm is several orders of magnitude faster than the existing approaches.

We have analyzed theories behind the algorithm, and proved that balancing is $e$-projection in a special type of a statistical manifold, in particular, a dually flat Riemannian manifold studied in information geometry.
Our key finding is that the gradient of the manifold, equivalent to Riemannian metric or the Fisher information matrix, can be analytically obtained using the M{\"o}bius inversion formula.

Our information geometric formulation can model several machine learning applications such as statistical analysis on a DAG structure.
Thus, we can perform efficient learning as projection using information of the gradient of manifolds by reformulating such models, which we will study in future work.

\section*{Acknowledgements}
The authors sincerely thank Marco Cuturi for his valuable comments.
This work was supported by JSPS KAKENHI Grant Numbers JP16K16115, JP16H02870 (MS), JP26120732 and JP16H06570 (HN).
The research of K.T.~was supported by JST CREST JPMJCR1502, RIKEN PostK, KAKENHI Nanostructure and KAKENHI JP15H05711.


\begin{thebibliography}{38}
\providecommand{\natexlab}[1]{#1}
\providecommand{\url}[1]{\texttt{#1}}
\expandafter\ifx\csname urlstyle\endcsname\relax
  \providecommand{\doi}[1]{doi: #1}\else
  \providecommand{\doi}{doi: \begingroup \urlstyle{rm}\Url}\fi

\bibitem[Agresti(2012)]{Agresti12}
A.~Agresti.
\newblock \emph{Categorical data analysis}.
\newblock Wiley, 3 edition, 2012.

\bibitem[Ahmed et~al.(2003)Ahmed, De~Loera, and Hemmecke]{Ahmed03}
M.~Ahmed, J.~De~Loera, and R.~Hemmecke.
\newblock \emph{Polyhedral Cones of Magic Cubes and Squares}, volume~25 of
  \emph{Algorithms and Combinatorics}, pages 25--41.
\newblock Springer, 2003.

\bibitem[Akartunal{\i} and Knight(2016)]{Akartunali16}
K.~Akartunal{\i} and P.~A. Knight.
\newblock Network models and biproportional rounding for fair seat allocations
  in the {UK} elections.
\newblock \emph{Annals of Operations Research}, pages 1--19, 2016.

\bibitem[Amari(2001)]{Amari01}
S.~Amari.
\newblock Information geometry on hierarchy of probability distributions.
\newblock \emph{IEEE Transactions on Information Theory}, 47\penalty0
  (5):\penalty0 1701--1711, 2001.

\bibitem[Amari(2009)]{Amari09}
S.~Amari.
\newblock Information geometry and its applications: Convex function and dually
  flat manifold.
\newblock In F.~Nielsen, editor, \emph{Emerging Trends in Visual Computing: LIX
  Fall Colloquium, ETVC 2008, Revised Invited Papers}, pages 75--102. Springer,
  2009.

\bibitem[Amari(2014)]{Amari14}
S.~Amari.
\newblock Information geometry of positive measures and positive-definite
  matrices: Decomposable dually flat structure.
\newblock \emph{Entropy}, 16\penalty0 (4):\penalty0 2131--2145, 2014.

\bibitem[Amari(2016)]{Amari16}
S.~Amari.
\newblock \emph{Information Geometry and Its Applications}.
\newblock Springer, 2016.

\bibitem[Balinski(2008)]{Balinski08}
M.~Balinski.
\newblock Fair majority voting (or how to eliminate gerrymandering).
\newblock \emph{American Mathematical Monthly}, 115\penalty0 (2):\penalty0
  97--113, 2008.

\bibitem[Censor and Lent(1981)]{Censor81}
Y.~Censor and A.~Lent.
\newblock An iterative row-action method for interval convex programming.
\newblock \emph{Journal of Optimization Theory and Applications}, 34\penalty0
  (3):\penalty0 321--353, 1981.

\bibitem[Chang et~al.(2016)Chang, Paksoy, and Zhang]{Chang16}
H.~Chang, V.~E. Paksoy, and F.~Zhang.
\newblock Polytopes of stochastic tensors.
\newblock \emph{Annals of Functional Analysis}, 7\penalty0 (3):\penalty0
  386--393, 2016.

\bibitem[Cui et~al.(2014)Cui, Li, and Ng]{Cui14}
L.-B. Cui, W.~Li, and M.~K. Ng.
\newblock {B}irkhoff--von {N}eumann theorem for multistochastic tensors.
\newblock \emph{SIAM Journal on Matrix Analysis and Applications}, 35\penalty0
  (3):\penalty0 956--973, 2014.

\bibitem[Cuturi(2013)]{Cuturi13}
M.~Cuturi.
\newblock {S}inkhorn distances: Lightspeed computation of optimal transport.
\newblock In \emph{Advances in Neural Information Processing Systems 26}, pages
  2292--2300, 2013.

\bibitem[Frogner et~al.(2015)Frogner, Zhang, Mobahi, Araya, and
  Poggio]{Frogner15}
C.~Frogner, C.~Zhang, H.~Mobahi, M.~Araya, and T.~A. Poggio.
\newblock Learning with a {W}asserstein loss.
\newblock In \emph{Advances in Neural Information Processing Systems 28}, pages
  2053--2061, 2015.

\bibitem[Ganmor et~al.(2011)Ganmor, Segev, and Schneidman]{Ganmor11}
E.~Ganmor, R.~Segev, and E.~Schneidman.
\newblock Sparse low-order interaction network underlies a highly correlated
  and learnable neural population code.
\newblock \emph{Proceedings of the National Academy of Sciences}, 108\penalty0
  (23):\penalty0 9679--9684, 2011.

\bibitem[Gierz et~al.(2003)Gierz, Hofmann, Keimel, Lawson, Mislove, and
  Scott]{Gierz03}
G.~Gierz, K.~H. Hofmann, K.~Keimel, J.~D. Lawson, M.~Mislove, and D.~S. Scott.
\newblock \emph{Continuous Lattices and Domains}.
\newblock Cambridge University Press, 2003.

\bibitem[Idel(2016)]{Idel16}
M.~Idel.
\newblock A review of matrix scaling and sinkhorn's normal form for matrices
  and positive maps.
\newblock \emph{arXiv:1609.06349}, 2016.

\bibitem[Ito(1993)]{Ito93}
K.~Ito, editor.
\newblock \emph{Encyclopedic Dictionary of Mathematics}.
\newblock The MIT Press, 2 edition, 1993.

\bibitem[Knight(2008)]{Knight08}
P.~A. Knight.
\newblock The {S}inkhorn--{K}nopp algorithm: Convergence and applications.
\newblock \emph{SIAM Journal on Matrix Analysis and Applications}, 30\penalty0
  (1):\penalty0 261--275, 2008.

\bibitem[Knight and Ruiz(2013)]{Knight13}
P.~A. Knight and D.~Ruiz.
\newblock A fast algorithm for matrix balancing.
\newblock \emph{IMA Journal of Numerical Analysis}, 33\penalty0 (3):\penalty0
  1029--1047, 2013.

\bibitem[Lahr and de~Mesnard(2004)]{Lahr04}
M.~Lahr and L.~de~Mesnard.
\newblock Biproportional techniques in input-output analysis: Table updating
  and structural analysis.
\newblock \emph{Economic Systems Research}, 16\penalty0 (2):\penalty0 115--134,
  2004.

\bibitem[Lamond and Stewart(1981)]{Lamond81}
B.~Lamond and N.~F. Stewart.
\newblock {B}regman's balancing method.
\newblock \emph{Transportation Research Part B: Methodological}, 15\penalty0
  (4):\penalty0 239--248, 1981.

\bibitem[Livne and Golub(2004)]{Livne04}
O.~E. Livne and G.~H. Golub.
\newblock Scaling by binormalization.
\newblock \emph{Numerical Algorithms}, 35\penalty0 (1):\penalty0 97--120, 2004.

\bibitem[Marshall and Olkin(1968)]{Marshall68}
A.~W. Marshall and I.~Olkin.
\newblock Scaling of matrices to achieve specified row and column sums.
\newblock \emph{Numerische Mathematik}, 12\penalty0 (1):\penalty0 83--90, 1968.

\bibitem[Miller and Blair(2009)]{Miller09}
R.~E. Miller and P.~D. Blair.
\newblock \emph{Input-Output Analysis: Foundations and Extensions}.
\newblock Cambridge University Press, 2 edition, 2009.

\bibitem[Moon et~al.(2009)Moon, Gunther, and Kupin]{Moon09}
T.~K. Moon, J.~H. Gunther, and J.~J. Kupin.
\newblock Sinkhorn solves sudoku.
\newblock \emph{IEEE Transactions on Information Theory}, 55\penalty0
  (4):\penalty0 1741--1746, 2009.

\bibitem[Nakahara and Amari(2002)]{Nakahara02}
H.~Nakahara and S.~Amari.
\newblock Information-geometric measure for neural spikes.
\newblock \emph{Neural Computation}, 14\penalty0 (10):\penalty0 2269--2316,
  2002.

\bibitem[Nakahara et~al.(2003)Nakahara, Nishimura, Inoue, Hori, and
  Amari]{Nakahara03}
H.~Nakahara, S.~Nishimura, M.~Inoue, G.~Hori, and S.~Amari.
\newblock Gene interaction in {DNA} microarray data is decomposed by
  information geometric measure.
\newblock \emph{Bioinformatics}, 19\penalty0 (9):\penalty0 1124--1131, 2003.

\bibitem[Nakahara et~al.(2006)Nakahara, Amari, and Richmond]{Nakahara06}
H.~Nakahara, S.~Amari, and B.~J. Richmond.
\newblock A comparison of descriptive models of a single spike train by
  information-geometric measure.
\newblock \emph{Neural computation}, 18\penalty0 (3):\penalty0 545--568, 2006.

\bibitem[Parikh(1979)]{Parikh79}
A.~Parikh.
\newblock Forecasts of input-output matrices using the {R.A.S.} method.
\newblock \emph{The Review of Economics and Statistics}, 61\penalty0
  (3):\penalty0 477--481, 1979.

\bibitem[Parlett and Landis(1982)]{Parlett82}
B.~N. Parlett and T.~L. Landis.
\newblock Methods for scaling to doubly stochastic form.
\newblock \emph{Linear Algebra and its Applications}, 48:\penalty0 53--79,
  1982.

\bibitem[Rao et~al.(2014)Rao, Huntley, Durand, Stamenova, Bochkov, Robinson,
  Sanborn, Machol, Omer, Lander, and Aiden]{Rao14}
S.~S.~P. Rao, M.~H. Huntley, N.~C. Durand, E.~K. Stamenova, I.~D. Bochkov,
  J.~T. Robinson, A.~L. Sanborn, I.~Machol, A.~D. Omer, E.~S. Lander, and E.~L.
  Aiden.
\newblock A {3D} map of the human genome at kilobase resolution reveals
  principles of chromatin looping.
\newblock \emph{Cell}, 159\penalty0 (7):\penalty0 1665--1680, 2014.

\bibitem[Rota(1964)]{Rota64}
G.-C. Rota.
\newblock On the foundations of combinatorial theory {I}: {T}heory of
  {M}{\"o}bius functions.
\newblock \emph{Z. Wahrseheinlichkeitstheorie}, 2:\penalty0 340--368, 1964.

\bibitem[Sinkhorn(1964)]{Sinkhorn64}
R.~Sinkhorn.
\newblock A relationship between arbitrary positive matrices and doubly
  stochastic matrices.
\newblock \emph{The Annals of Mathematical Statistics}, 35\penalty0
  (2):\penalty0 876--879, 06 1964.

\bibitem[Sinkhorn and Knopp(1967)]{Sinkhorn67}
R.~Sinkhorn and P.~Knopp.
\newblock Concerning nonnegative matrices and doubly stochastic matrices.
\newblock \emph{Pacific Journal of Mathematics}, 21\penalty0 (2):\penalty0
  343--348, 1967.

\bibitem[Solomon et~al.(2015)Solomon, de~Goes, Peyr{\'e}, Cuturi, Butscher,
  Nguyen, Du, and Guibas]{Solomon15}
J.~Solomon, F.~de~Goes, G.~Peyr{\'e}, M.~Cuturi, A.~Butscher, A.~Nguyen, T.~Du,
  and L.~Guibas.
\newblock Convolutional {W}asserstein distances: Efficient optimal
  transportation on geometric domains.
\newblock \emph{{ACM} Transactions on Graphics}, 34\penalty0 (4):\penalty0
  66:1--66:11, 2015.

\bibitem[Soules(1991)]{Soules91}
G.~W. Soules.
\newblock The rate of convergence of sinkhorn balancing.
\newblock \emph{Linear Algebra and its Applications}, 150:\penalty0 3--40,
  1991.

\bibitem[Sugiyama et~al.(2016)Sugiyama, Nakahara, and Tsuda]{Sugiyama2016ISIT}
M.~Sugiyama, H.~Nakahara, and K.~Tsuda.
\newblock Information decomposition on structured space.
\newblock In \emph{2016 IEEE International Symposium on Information Theory},
  pages 575--579, July 2016.

\bibitem[Wu and Michor(2016)]{Wu16}
H.-J. Wu and F.~Michor.
\newblock A computational strategy to adjust for copy number in tumor {Hi-C}
  data.
\newblock \emph{Bioinformatics}, 32\penalty0 (24):\penalty0 3695--3701, 2016.

\end{thebibliography}

\end{document}